%% file: main.tex
\documentclass[submission,copyright,creativecommons]{eptcs}

\usepackage{amsmath,amssymb}  
\usepackage{amsthm}
\usepackage{graphicx,tikz} 
\usepackage{lipsum}
\usepackage{hyperref}
\usepackage{url}

\newtheorem{definition}{Definition}[section]
\newtheorem*{definition*}{Definition}

\newtheorem{proposition}{Proposition}[section]

\newtheorem{example}{Example}[section]

\usepackage{multicol}
\usepackage[numbers]{natbib}
\usepackage{lineno}

\usepackage{pifont}

\def \mls {\textsf{MLMS}}
\def \SubLat {\mathsf{DsubLat}}
\def \LAT {\mathsf{LAT}}
\def \inc {inc}

\def \cert {cert}
\def \rinc {rinc}

\usepackage{amsmath,amssymb}  
\usepackage{graphicx,tikz} 
\usepackage{xcolor} 
\usepackage[all]{xy} 

\usepackage{cleveref}
\usepackage{lipsum}

\usepackage{float}
\usepackage{proof}

\newcommand{\vv}{\vspace{-1mm}}

\newcommand{\Lg}{\ensuremath{\mathcal{L} }}

\newcommand{\pair}[1]{\langle #1 \rangle}

 \newcommand{\lw}{\ensuremath{L4^w}}
 \newcommand{\ls}{\ensuremath{L4^s}}
 \newcommand{\nw}{\ensuremath{N3^w}}
 \newcommand{\ns}{\ensuremath{N3^s}}
 \newcommand{\bw}{\ensuremath{B3^w}}
 \newcommand{\bs}{\ensuremath{B3^s}}
 \newcommand{\cw}{\ensuremath{C2^w}}
 \newcommand{\cs}{\ensuremath{C2^s}}

\newcommand{\fdetobot}{\ensuremath{FDE^\to_{\bot}}}

\newcommand{\ktto}{\ensuremath{{\it K3}^{\to}}}

\newcommand{\lpto}{\ensuremath{{\it LP}^{\to}_\bot}}

\newcommand{\jt}{\ensuremath{\textit{J3}}}

\newcommand{\lt}{\ensuremath{\textit{\L3}}}

\newcommand{\lj}{\ensuremath{\textit{{\L}J4}}}

\label{xLOGICS}{}

\newcommand{\letkp}{\ensuremath{LET_{K}^+}}

\newcommand{\fde}{\ensuremath{FDE}}

\newcommand{\bo}{\textsf{b}}
\newcommand{\nei}{\textsf{n}}

\newcommand{\lets}{\textit{LET}s}

 \newcommand{\pp}{\ensuremath{[P]}}
\newcommand{\ff}{\ensuremath{[F]}}
\newcommand{\xx}{\ensuremath{[X]}}

\newcommand{\T}{\ensuremath{\mathcal{T}}}

\label{xCOLORS} %

\newcommand{\white}{\color{white}}

\label{xSYMBOLS}

\newcommand{\con}{\ensuremath{{\circ}}}

\label{xGENERAL COMMANDS}%

\newcommand{\mh}{\noindent}

\newcommand{\m}{\vspace{1mm}}
\newcommand{\mm}{\vspace{2mm}}
\newcommand{\mmm}{\vspace{3mm}}
\newcommand{\mmmm}{\vspace{4mm}}
\newcommand{\setl}{\setlength\itemsep{-0.2em}}

\newcommand{\bqu}{\begin{quote}} 
\newcommand{\equ}{\end{quote}}
\newcommand{\enr}{\begin{enumerate}[label={(\arabic*)}, resume]}
\newcommand{\eenr}{\end{enumerate}}

\label{xPORTUGUES}

\usepackage{multicol}
\usepackage[numbers]{natbib}
\usepackage{pifont}

\def \inc {inc}
\def \cert {cert}
\def \rinc {rinc}

 \title{Merging and Evolution of Data Bases: A Many-Logic   Approach} 
\author{Manuel A. Martins
\institute{CIDMA \& Dep. of Mathematics\\ University of Aveiro}\email{martins@ua.pt}
		        \and
 Abilio Rodrigues
\institute{Federal University of Minas Gerais\\ Minas Gerais, Brazil}
\email{abilio.rodrigues@gmail.com}
}

\def\titlerunning{Merging and Evolution of Data Bases}

\def\authorrunning{M. Martins \& A. Rodrigues}

\begin{document}

\maketitle

\begin{abstract}
A new approach to information-based logics using many-logic modal structures (\mls s) was recently proposed by Martins et al. In such a setting, it is possible to have worlds with different underlying logics within the same frame, all anchored to a base lattice that contains the semantics of each logic as a down-complete sublattice.
In this paper, we take the lattice \textit{L6}, defined by the six-valued semantics of the logic of evidence and truth \letkp, as our base lattice. We also consider four-, three-, and two-valued formal systems associated with down-complete sublattices of \textit{L6}. These logics are capable of representing paracomplete, paraconsistent, and classical contexts with six, four, three, and two informational scenarios, respectively.
    As applications of \mls s, we will discuss the merging and temporal evolution of databases whose underlying logics are two-, three-, four-, and six-valued. We will also investigate  inconsistency measures   in this context.

 \end{abstract}

\section{Introduction}

An information-based logic is a logic  suitable for processing
information  in the sense of taking a database as a set of premises and drawing
conclusions from these premises in a sensible way. 
Nuel Belnap in \cite{belnap.1977.how} proposed an information-based interpretation of the logic of first-degree entailment (\textit{FDE}), motivated by the idea of computer dealing with possibly inconsistent and incomplete databases, that is, situations in which there may be conflicting information or no information at all about a sentence $A$. In the former scenario, explosion fails; in the latter, excluded middle fails. Such situations are not exceptional, but rather a common reality, often resulting from noise, uncertainty, and inaccuracies in the process of collecting information.


The computer conceived by Belnap `modifies its current set-up' when it gets new information 
\cite[p.~36]{belnap.1977.how}. 
A `set-up', thus, is a specific configuration of the computer over time, 
and these configurations can be seen as \textit{information states} 
represented by worlds in a Kripke model.
The many-logic modal structures (\mls) proposed in  \cite{RM24} and \cite{Martins2026} are well-suited for representing the connections between information states 
viewed as  databases with different underlying logics, the evolution of databases over time, and 
databases accessed by users with different privileges.  


This framework considers frames whose worlds may be associated with different formal systems. Each of these systems is represented by a down-complete sublattice (dc-sublattice, for short) of a common base lattice. Modal relationships are then established through this shared algebraic structure, which embeds the semantics of the individual logics and provides a unified setting in which they can interact. As a consequence, connections between worlds can be understood as interactions between their underlying logics. In such a setting, the modalities are interpreted in a way that differs from the traditional Kripkean approach.

In addition to introducing the general formal machinery of \mls s\footnote{The development of \mls s has recently emerged as a framework for extending modal reasoning beyond the classical setting. The foundational proposal was introduced by Freire and Martins in \cite{RM24}, who developed the notion of modality across different logical systems. The algebraic basis of the framework was later investigated by Freire, Madeira, and Martins in \cite{Freire_Madeira_Martins}, who introduced lattice-based constructions relating the lattices associated with distinct logics. Further developments include the study of regular non-normal modal systems (\cite{freiremartins-coniglio}) and the institutional foundations of many-logic modal logic (\cite{FMM24}).
}, we will investigate a number of applications based on the lattice  {$L6$}, which provides the six-valued semantics for the logic of evidence and truth \letkp, introduced by Coniglio and Rodrigues in \cite{con.rod.2023}. 
Logics of evidence and truth (\textit{LETs}) are equipped with a classicality operator \con\ that 
recovers classical logic for sentences in its scope. Thus, when $\con A$ holds, 
the sentence $A$ is subjected to classical logic, 
otherwise the underlying logic of $A$ is \fde, or an extension of \fde.   
\lets\ can   be interpreted in terms of information, and in this case a 
sentence $\con A$ means that the information conveyed by $A$, positive or negative, is reliable. 
The connective \con, thus,  can be thought of as a kind of \textit{certification}.   
$L6$ extends the well-known logical lattice $L4$,
defined by the four-valued semantics of \textit{FDE} \cite[p.~42]{belnap.1977.how}, 
by adding a new top and a new bottom, 
which correspond to two additional semantic values intuitively interpreted as reliable positive information 
and reliable negative information (cf.~\cite[Sect.~4.1]{con.rod.2023}). \letkp, as \lets\ in general, is able to represent six scenarios: the four scenarios of \fde\ plus two more scenarios of reliable information. 
A detailed study of logics defined by dc-sublattices of \textit{L6} can be found in \cite{Martins2026}.

The aim of this article is to explore how dynamic interactions between distinct logical contexts can be formalized in richer computational settings, as a step toward extending the original proposal.
In particular, we explore mechanisms for information merging and temporalization, enabling the representation not only of static accessibility relations between worlds, but also of the evolution, synchronization, and  reconciliation of information states over time. 
By incorporating temporal dimensions into \mls s, we allow logical structures to capture successive stages of knowledge acquisition, revision, and propagation, while merge operations provide formal tools for combining information originating from different sources. 
We introduce generalized modal and temporal operators capable of modeling transitions between paraconsistent, paracomplete, and classical reasoning environments within a unified semantic architecture. 
These extensions reinforce the applicability of \mls s to distributed databases, evolving knowledge systems, and multi-agent environments, where the continuous integration and temporal management of conflicting information are essential for coherent and reliable reasoning.

The remainder of this paper is structured as follows.   
Section \ref{sec.mlms} presents the general machinery of many-logic modal structures. 
In Section~\ref{sec.logics}, we present the matrix logics considered in this paper. These include \letkp\ 
and some well-known two-, three-, and four-valued logics, which are capable of representing subsets of the six scenarios characterized by \letkp. The lattices associated with these logics are down-complete sublattices of $L6$. 
 In Section~\ref{sec.applications}, the main section of this paper, we propose further developments of \mls s, namely: 
 temporalization of \mls s that allow to define temporal operators \textit{past}, \textit{future},  and  \textit{next}; merging of \mls s, that intends to represent the merging of databases; analysis of inconsistencies and certified data in databases.

\vv\section {Many-logic modal structures}   \label{sec.mlms}
In this section, we recall the key definition of many-logic modal structures as presented in \cite[Sect.~2]{Martins2026}. 
We start by introducing some notions related to lattices that will be important in the definition of the main concept of this section.  

\begin{definition} Let  $\mathsf{L}=\langle L,\cdot,{+}\rangle$ be a complete lattice with associated partial order $\leq$. A {\em down-complete sublattice (dc-sublattice)}  of  $\mathsf{L}$ is a complete lattice $\mathsf{L}'=\langle L',\cdot',{+}'\rangle$  with associated partial order $\leq'$ such that:
\begin{enumerate} \setl 
	\item $L' \subseteq L$;
	\item ${\leq}'={\leq} \cap (L' \times L')$, i.e.: for every $x,y \in L'$, $x \leq' y$ iff $x \leq y$.
\end{enumerate}
\mh 
We denote by   $\SubLat(\pair{{L},\cdot,+})$  the set of all \textit{dc-sublattices} of $\pair{{L},\cdot,+}$. 
\end{definition}

\begin{definition}[The down-interpretation]
		 Let  $\mathsf{L}=\langle L,\cdot,{+}\rangle$ be a complete lattice with associated partial order $\leq$, and let  $\mathsf{L}'=\langle L',\cdot',{+}'\rangle$ be a  dc-sublattice  of  $\mathsf{L}$. The {\em down-interpretation} of value $x \in L$ in $\mathsf{L}'$ is defined as follows:
\vv $$x^{\mathsf{L}'} = \bigvee_{\mathsf{L}'} \{y \in L' \ : \ y \leq x\}.$$  
\end{definition}
\begin{definition}[Filtered $\Lg$-lattices]
	Let a \textit{filtered $\Lg$-lattice} be a matrix logic  $ \pair{{\bf L},{\rm D}} $ 
	such that    $\textbf{L}$ is an algebra over $ \Sigma $ with domain $L$ and  $ {\rm D} \subseteq  L $, 
	where the $ \Sigma_L $-reduct  $  \pair{L,\cdot, +} $ is a lattice.
\end{definition}

Consider a propositional \textit{language $\Lg$},  defined over  a signature  $ \Sigma $ containing $ \Sigma_L =  \pair{\land, \lor}$,  and a set of sentential letters~$ Var $. 

\begin{definition} [Many-logic modal  structures, \mls]\label{def.mlms}  
 Let  $\LAT \subseteq \SubLat(\pair{{L},\cdot ,+})$, such that $ \pair{L,\cdot, +} \in \LAT$,   and $\Lg^\Box$ the language obtained by extending $\Lg$ with 
 the unary connective $\Box$.  
   A \textit{many-logic modal structure} $M$ over $\LAT$ and $\Lg^\Box$  is a tuple $\pair{W, R, I,v}$ such that 
$W \neq \emptyset$ and:
	\begin{enumerate} \setl
    \item $\mathbb{L}$ is a set of matrix logics $ \pair{{\bf L}',{\rm D}'} $ over $\Sigma$ such that $ \Sigma_L $-reduct $\pair{{L}',\cdot,+}$ of ${\bf L}'$ belongs to $\LAT $.
\item $I:W\to \mathbb{L}$ assigns a matrix logic $ \pair{{\bf L}_w,{\rm D}_w} $ to each world $w \in W$. By simplicity, and when there is no risk of confusion, we may write $I(w)=L_w$, where $L_w$ is the domain of  ${\bf L}_w$.
\item $R$ is a relation from worlds to worlds, i.e. $R \subseteq W \times W$.
\item   $v$ is a \textit{valuation function} 
 from the set $Var$ of sentential letters to values in $I(w)$, i.e. for each world $w \in W$ and $p\in Var$,  
  $v: W \times Var \longrightarrow L$ such that $v(w,p)\in L_w$.  
	    \end{enumerate}
	\end{definition}

In each world $w$ a valuation $v_w$ (:= $v(w,-)$) 
is extended to all formulas of $\Lg$ based on the  matrix logic $\pair{\textbf{L}_w, {\rm D}_w}$ related to the lattice 
assigned to $w$, given by $I(w)$. 
To extend a valuation $v_w$  to all formulas of 
$\Lg^\Box$ (i.e.~to assign semantic values to formulas $\Box A$) we interpret in a complete sublattice 
$L' \subseteq L$ elements not necessarily in $L'$ according to the down interpretation.

Let $w$ and $w'$ be worlds,  $L_w$ and $L_{w'}$ the corresponding lattices,    
and $v_{w}(A)$ and $v_{w'}(A)$ the values assigned to a formula $A$ in $w$ and $w'$ by a valuation 
$v$. We denote by  $ (v_{w}(A))^{L_{w'}}$ and $ (v_{w'}(A))^{L_{w}}$ the values $v_{w}(A)$ and $v_{w'}(A)$  `seen' from the perspective of the worlds $w'$ and $w$ respectively, according to the down-interpretation.

The semantic value of a  formula $\Box A$ in a world $w$ 
is the infimum of the set of  values of $A$ in all the worlds accessed from $w$ as seen 
from the perspective of the local lattice $L_w$.  
A  clause for the connective $\Box$ is thus  defined as follows:  

\begin{definition}[Semantic clause for $\Box$]\label{def.clause.box}
\mh

 	\begin{center}
		 $v_{{w}}(\Box  A)=\bigwedge\limits_{L_{{w}}}\{(v_{w'}(A))^{L_{{w}}} : w' \in W \land {w} R w'  \}$. 
		\label{clause.box}
	\end{center}
\end{definition}

\mh According to the intuitive interpretation we propose, 
 $v_w(\Box A)=x$  means 
`$A$ has the semantic value $x$ for an agent accessing the information available in the acessible worlds $w$'.  

We say that $A$ holds in $w$ in a structure $M$, denoted by $M, w \vDash A$, if $v_w(A) \in \rm D$. 
It is to be observed that in the framework adopted here, saying that a formula has a specific value  ($v_w(A)=x$)  in a model $M$ and a world $w$  is  generally  
more informative than  saying that it holds  (or does not hold). 
This is due to the multivalued semantics of the  logics we will see in Section~\ref{sec.logics} below, which may involve multiple designated (or non-designated) values. 
Examples of many-logic modal structures based on these logics, 
together with their corresponding lattices, can be found in Section~5 of \cite{Martins2026}.

\vv\section {The logics} \label{sec.logics}

In this paper we consider the following matrix logics: 
\vv\begin{description} \setl 
    
 \begin{multicols}{2}  
    \item  \quad $ \letkp = \langle L6, \{ T, T_0, b \}  \rangle $, 

    \item  \quad  $ \fdetobot = \langle \lw, \{ T_0,  b \}  \rangle $

    \item  \quad  $ \lj = \langle \ls, \{ T,   b \}  \rangle $

    \item  \quad   $ \lpto   = \langle \bw, \{ T_0,   b \}  \rangle $

    \item  \quad  $ \jt   = \langle \bs, \{ T,   b \}  \rangle $

    \item 
$ \ktto = \langle \nw, \{ T_0 \}  \rangle $

    \item $ \lt = \langle \ns, \{ T  \}  \rangle $

    \item $ CL^w   = \langle \cw, \{ T_0  \}  \rangle $

    \item  $ CL^s   = \langle \cs, \{ T  \}  \rangle $

    \item  $ \white CL^s   = \langle \cs, \{ T  \}  \rangle $
 \end{multicols}
\end{description}
\vv\mh  Let the  set $LAT = \{  L6, L4^w, L4^s, B3^w, B3^s, N3^w, N3^s, C2^w, C2^s \}  $
be the collection of algebras underlying the logics   above.
For the semantic characterization of these logics in terms of down-complete sublattices of the lattice \textit{L6}, 
as well as for a detailed presentation of the corresponding lattices, their connection with the scenarios represented by each logic -- which are subsets of the six scenarios represented by \letkp\ -- and their characterization as information-based logics, we refer the reader to Sections~3 and 4 of \cite{Martins2026}.
 Here, we only note that \textit{J3} is the three-valued paraconsistent logic introduced in D'Ottaviano et al. \cite{dottaviano.1970}, 
\textit{\L 3} is the {\L}ukasiewicz' three-valued paracomplete logic, \fdetobot\ and \lpto\ extend 
respectively \fde\ and the logic of paradox \textit{LP} with a material implication and a bottom particle. 
Moreover, \ktto\ extends Kleene's \textit{K3} with a material implication,  and \textit{{\L}J4} may be viewed either as a paraconsistent version of \textit{{\L}3} or as a paracomplete  version of \textit{J3}.

\section {Operations on information structures} \label{sec.applications}

In this section, we introduce operations on \mls s, the evolution-structures that represent their development over time, and the tools for measuring inconsistency.

 Let a \textit{generalized literal} be a formula of the form $p$, $\neg p$, or $ \con p$, for a variable $p$ \footnote{An explanation of the notion of  extended literals $p$, $\neg p$, and $\circ p$ can be found in  \cite{con.rod.2023} and \cite{rodrigues.coniglio.qletf}, where six valued logics of evidence and truth (\lets) are investigated. 
Belnap-Dunn logic works with two primitive notions, positive information and negative information, represented by literals $p$ and $\neg p$. 
The six-valued semantics of \lets\ adds one more primitive notion, reliable information, expressed by $\circ p$.  Once again, we refer the reader in particular  to the paper \cite{rodrigues.coniglio.qletf}, where  these concepts are explained in detail.}. 
A many-logic modal  structure can be viewed as a $W$-family $K_w $ ($w\in W$) of  generalized  
literals defined as the smallest set satisfying the following conditions:
\begin{multicols}{2}
   \begin{itemize}\setl \small 
\item[]  For any propositional variable $p$: 
\item[]
\item[-] if $v_w (p)=T$, then $p,\circ p\in K_w$,  
\item[-] if $v_w (p)=F$, then $\circ p, \neg p\in K_w$,  
\item[-] if $v_w (p)=\bo$, then $p, \neg p\in K_w$,  
{\white\item[-] if $v_w (p)=\nei$, then $\circ p\in K_w$,}  
\item[]
\item[-] if $v_w (p)=\nei$, then $\circ p\in K_w$,  
\item[-] if $v_w (p)=T_0$, then $ p\in K_w$,  
\item[-] if $v_w (p)=F_0$, then $\neg p\in K_w$.
\end{itemize} 
\end{multicols}

\mh An information state is thought of here as the set of  
 formulas that hold in a world $w$, which is the set $K_w$ of generalized literals closed under the notion of 
logical consequence related to the lattice $L_w$.

\subsection {Merging  many-logic modal  structures} 

 The merging of two  compatible  many-logic modal structures can be defined in a natural way. 
 The general idea is as follows: given two structures $M$ and $M'$ and a world $w$ that belongs to both 
 $M$ and $M'$, the merging yields a structure $M''$ in which $w$ contains the generalized literals 
   that belong to $w$ in $M$ and $M''$. 
   This operation, of course, cannot yield a trivial world. 
   Thus, $M$ and $M'$ have to be {compatible} 
 in the sense that in the resulting structure  there cannot be a world 
 with literals $p$, $\neg p$ and $\con p$.

 \begin{definition}
 	Two  many-logic modal  structure $M=\pair{W, R, I,v}$ and $M'=\pair{W', R', I',v'}$ are said to be 
 	{\it incompatible} if for some  $w\in W\cap W'$ there is a sentential letter   $p$ 
   	such that one of the following conditions hold:
 
 	\begin{itemize}\setl
 		\item[-] $v_w(p)=T$ and ($v_w'(p)=b$ or $v_w'(p)=F_0$ or $v_w'(p)=F$)
 		\item[-] $v_w(p)=T_0$ and $v_w'(p)=F$ 
 		\item[-] $v_w(p)=b$ and ($v_w'(p)=T$ or $v_w'(p)=F$)
 		\item[-] $v_w(p)=F_0$ and $v_w'(p)=T$
 		\item[-] $v_w(p)=F$ and ($v_w'(p)=T$ or $v_w'(p)=T_0$ or $v_w'(p)=b$)
 	\end{itemize}
 	When they are not incompatible, they are called {\it  compatible}.
 	 \end{definition}

\mh The definition above ensures that  there is no world in which for a sentential letter   $p$,  
  $\{ p, \neg p, \con p\} \subseteq K_w \cup K'_w$. 
 	Note that two weak worlds (\cw, \bw, \nw, and \lw) are always compatible, since there is no formula $\con p$ in the respective sets $K_w$. 
 Besides, it is not required that a world  $w  \in W\cap W'$ be assigned the same lattice in $M$ and $M'$. 

\mm 
The merging of two compatible structures $M$ and $M'$ results in the 
many-logic modal structure $M//M'$ as defined  below.

 \begin{definition} 	Given two compatible many-logic modal  structure $M=\pair{W, R, I,v}$ and $M'=\pair{W', R', I',v'}$, the 
{\it merging of $M$ with $M'$} is  the many-logic modal structure $M//M' =(W\cup W', R\cup R',\bar I,\bar v)$ where: 
 	 \begin{itemize}\setl
 	\item[-] 
     	$\bar I(w)$ is defined using $I(w)$ and $I'(w)$ according to the Table \ref{tab.I.merging} below.
            \item[-]  $\bar v$ is defined as follows:
 			 \begin{itemize}
     \item[-] For $w\in W\cap W'$: 
 	$\bar v_w(p)$ is defined using $v(p)$ and $v'(p)$ as defined in Table \ref{table.valuation.merging} below, where the $X$ indicates that the merging cannot be done.
 	 	\item[-] For $w \in W\backslash  W'$: 
 	$\bar v_w(p)=v_w(p)$ 
 	\item[-] For $w \in W'\backslash  W$: 
 	$\bar v_w(p)=v'_w(p)$
 \end{itemize}
 	 \end{itemize}

\end{definition}

\begin{table}[H] 
    \centering{\footnotesize
    \begin{tabular}{r|cccccccccc} 
$I(w)\backslash I'(w)$&$L6$&\ls&\lw  & \bs   &\bw   &\ns   &\nw   & \cs   & \cw  \\ \hline
        $L6$ &  $L6$  &$L6$  & $L6$  &$\bs$  &$L6$  &$L6$  &$L6$  & $\cs$ & $L6$ \\ 
         \ls & $L6 $  & \ls  & $L6$  & \bs   & $L6$ & \ls  & $L6$ & \cs   & $L6$ \\ 
         \lw & $L6$   & $L6$ & \lw   & $\bs$ & \bw  & $L6$ & \lw  & $\cs$ & \lw  \\ 
        \bs  &$\bs$   & \bs  & $\bs$ & \bs   & \bs  &$\bs$ & $\bs$& \cs   & \bs \\ 
         \bw & $L6$   & $L6$ & \bw   & \bs   & \bw  & $L6$ & \bw  & \cs   & \bw \\ 
        \ns  & $L6$   & \ls  & $L6$  & \bs   & $L6$ & \ns  &$L6$   & \cs   & $L6$\\ 
         \nw & $L6$   & $L6$ & \lw   & \bs   & \bw  &  $L6$  & \lw  & $\cs$ & \cw \\ 
         \cs & $\cs$  & \cs  & $\cs$ & \cs   & \cs  &\cs   & $\cs$& \cs   & $\cs$\\ 
         \cw & $L6$   & $L6$ & \bw   &\bs    & \bw  & $L6$ &\bw   & $\cs$ &  \bw  \\ 
    \end{tabular}}
    \caption{Merging lattices}
    \label{tab.I.merging}
\end{table}
 \begin{table}[H]{\footnotesize  
$$\begin{array}{r|cccccc}
	v_w\backslash v'_w& T& T_0&\bo & \nei & F_0&F  \\ \hline 
	T & T   & T &X&T&X&X\\
	T_0 &T&T_0&\bo&T_0&\bo &X\\
	\bo &X&\bo&\bo&\bo&\bo&X\\
	\nei  &T&T_0&\bo&\nei &F_0&F\\
	F_0 &X&\bo&\bo&F_0&F_0&F\\
	F &X&X&X&F&F&F
\end{array}
$$}
\caption{Merging valuations}
\label{table.valuation.merging}
\end{table}

  The definition of merging, alternatively, can be done in terms of the  \textit{W}-family $K_w$: 
\begin{itemize}\setl  
   \item[-] For $w\in W\cap W'$:   
 $K_{w_{M//M'}} = K_{w_{M}} \cup K_{w_{M'}} $ 
 	\item[-] For $w \in W\backslash  W'$: 
    $K_{w_{M//M'}} = K_{w_{M}}  $ 
 	\item[-] For $w \in W'\backslash  W$: 
    $K_{w_{M//M'}} = K_{w_{M'}}  $  
\end{itemize}

\mh For each world $w \in  W \cup W'$, the set $K_w$ defines the valuation $\bar v$ according to the  Table \ref{table.valuation.merging}.  
 For worlds in $W \cap W'$, it is not required that the associated lattice be the same.

\medskip

  The rationale for  the Table \ref{tab.I.merging} is as follows. 
Two weak worlds $w$ and $w'$ are always compatible, but when merged the result may be a
 lattice different from $L_w$ and $L_{w'}$. 
 Merging \cw\ with \cw\ yields a \bw: there cannot be any incompleteness in the resulting world, 
 but there may be contradictions. 
 Merging \nw\ with \nw\ yields a \lw: there may be incompleteness as well as contradictions in the resulting world. 
Merging \bw\ with \nw\ yields a \bw: there cannot be any incompleteness in the resulting world, 
 but there may be contradictions. And so on. 

Merging  $L6$ always  yields $L6$, except in the case of \cs\ and \bs. In the former, for every $p$, 
 either both $\con p$ and $p$  or both $\con p$ and $\neg p$ belong to $w$; in the case of \bs, 
 either $\con p$ and $p$,  or $\con p$ and $\neg p$, or $p$ and $\neg p$ belong to $w$. So, clearly the resulting 
 world must be \cs\ and \bs, respectively. It is unlikely, however, that worlds \bs\ and \cs\ will be compatible 
 with $L6$. 

Another case that is worth mentioning is the merging of \ns\ with \lw. The resulting world may have incompleteness 
from both \ns\ and \lw, and contradictions from \lw, as well as certified information from \ns\ and non-certified 
information from \lw. Thus, such a merging yields $L6$.   
 

\newpage

\begin{example}  Consider the structures $M_1$ and $M_2$: 
 \begin{multicols}{2}
\scriptsize 
\mh $M_1$

\mh $I(w_1)= L6, I(w_2)=\lw, I(w_3)= \bs $

\mh $v_{w_1}(p)=T, v_{w_2}(p)=T_0, v_{w_3}(p)=T $

\mh $R_1=\{\pair{w_1,w_2}, \pair{w_1,w_3}, \pair{w_1,w_1},\\ \pair{w_2,w_2},\pair{w_3,w_3}  \}$
 \input{ex/example6.1b} 

\mmmm\m 

$M_2$

$I(w_1)=\lw , I(w_2)=\cw $

$v_{w_1}(p)=T_0, v_{w_2}(p)=F_0    $

$R_2=\{\pair{w_1,w_2}, \pair{w_2,w_1}, \pair{w_1,w_1},\pair{w_2,w_2} \}$

\input{ex/ex.6.1.OK} 

\end{multicols}


The merging of $M_1$ and $M_2$ yields the structure $M_1//M_2$ below: 

\medskip

$\bar I(w_1)=L6, \bar  I(w_2)=\bw , \bar  I(w_3)=\bs $

$\bar v_{w_1}(p)=T_0, \bar  v_{w_2}(p)=\bo, \bar  v_{w_3}(p)=T  $

$R_1\cup R_2 = \{\pair{w_1,w_2}, \pair{w_1,w_3}, \pair{w_2,w_1}, \pair{w_1,w_1},\pair{w_2,w_2},\pair{w_3,w_3}  \}$

\begin{center}
    \input{ex/example6.1c} 
    \end{center}

\end{example}

\mh When applying the merging operation in a real-world situation, names may be assigned to databases so as to indicate which of them are to be merged. Databases sharing the same name are intended to be merged into a single database.
In the example above, the idea was to keep the \bs-database of $M_1$ unchanged while merging the $L6$-database of $M_1$ with the \lw-database of $M_2$, and the \lw-database of $M_1$ with the \cw-database of $M_2$. The names assigned to the worlds reflect the intended merging process: databases given the same name are merged together.

\begin{proposition} \ 
 
 \mh	(i) Let $M=\pair{W, R, I, v}$ and $M'=\pair{W, R, I, v'}$ be  two  many-logic modal structures. Then
	for any $w\in W$ and any formula $A$, $v_w (A)\in\{b,T_0,T\}$ or  $v'_w (A)\in\{b,T_0,T\}$ imply $\bar{v}_w (A)\in\{b,T_0,T\}$.

 \mh (ii) The merging operation  is commutative and associative, but not idempotent. 
\begin{proof} Regarding item (i), it is easy to see that if $A$ is assigned designated value by  $v_w$ or $v'_w$, which is a value  represented by a snapshot $(z_1, z_2, z_3)$ such that $z_1 = 1$, the value assigned to $A$ by  $\bar{v}_w$ will also be designated. As for (ii), it is clear to see that   
commutativity follows from the fact that $\cup$ and $\cap$ are commutative and 
the Table \ref{tab.I.merging} is commutative.  Associativity    follows from the fact that $\cup$ and $\cap$ are  associative and the Table \ref{tab.I.merging} in each case the resulting value depends on $K_w \cup K'_w$. To see that idempotency does not hold, consider
the merging of two \cw-worlds, which yields a \bw-world. \end{proof}
 \end{proposition}

\subsection {Temporalising  many-logic modal structures}

In Example 5.4 of \cite{Martins2026}, we showed how the temporal evolution of  information states over time 
  can be represented using only the tools already available in the definition of \mls s.
However, the temporal evolution of information states may also be represented by means of what we call 
here \textit{T-evolution structures}.

Consider now the language $\Lg^{\mathsf{T}}$, obtained by extending the language $\Lg^\Box$  with the 
unary operators \pp\ and \ff\ with the following meanings: 
\begin{itemize} \setl 
     \item[]  $\pp (A)$:  \textit{the semantic value of $A$  in the past}, 

 $\ff (A)$:  \textit{the semantic value of $A$ in the future}.  

 \end{itemize}

\mh The semantic values of $A$ in the past and in the future are given, respectively, by the infimum of the values that $A$ assumes in the past and by the infimum of the values that $A$ assumes in the future.

\mm 
\begin{definition}  \m \mh A {\it pointed many-logic modal structure} is a pair $(M,w_0)$, where  $M = \pair{W, R, I, v}$ is a many-logic modal  structure and $w_0\in W$.  $w_0$ is called the reference world of $M$. 
\end{definition}

A pointed many-logic modal structure is a many-logic modal structure together with a distinguished world. 
 {The purpose of introducing pointed structures is to make explicit the world at which formulas are evaluated in 
 $T$-evolution structures.} 
  Let $PAT$ be a set of pointed many-logic modal structures $\pair{M,w_0}$.

 \begin{definition}[$T$-evolution structures]  \label{def.t.evol} 
 \mh A \textit{$T$-evolution structure} ${M^{\mathsf{T}}}$ is a triple $\pair{T,<, g}$, 
    where  $\pair{T,<}$ is a  flow of time (that is, $T$ is a non-empty set and $<$ is a partial order on $T$), 
    and $g:T \to PAT$ is a function mapping each time point in $T$ to a structure $M$ in $PAT$.

  \m\mh The value $ \tau $ of a formula $A$ in a structure $M^{\mathsf{T}}$ at a time point $t\in T$ is determined by a mapping $\tau(A,t) \to L6$, defined as follows:
	\begin{enumerate}\setl
	             \item  $\tau (A, t ) =v(A, {w_0})$, where $w_0$ is the reference world of $g(t)$,
		\item  $\tau ({\pp A},t)= \bigwedge  \{\tau ({A},s) :  s<t\}$,
\item  $\tau ({\ff A},t)= \bigwedge  \{\tau ({A},s) : t<s\}$,

\item $\tau  ({\neg A},t)=\neg  \tau({A}, t)$,
 
	\item  $\tau(A \wedge B,t)= \tau(A, t)  \wedge \tau(B, t)  $,

\item $\tau(A \lor B,t)= \tau(A, t)  \lor \tau(B, t)  $,
 
\item  $\tau(A \to B,t)= \tau(A, t)  \to \tau(B, t)  $.  
    
    \end{enumerate}	
 
\mh A \textit{path} in a \textit{T}-evolution structure $M^{\mathsf{T}}$ is a subset $T' \subseteq T$ such that the elements of $T'$ 
are linearly ordered with respect to the ordering defined on $T$.

 \end{definition}

Some remarks are in order here:\\  (i) If $(T,<)$ is linear, discrete, and without end point,  a next operator  \xx\    
 can be added with the following semantic clause: 
\begin{center}
     $\tau ({\xx A},t)= \tau (A,t')$, where $t'$ is the time point immediately after $t$. 
\end{center}

\mh (ii) A path can be any sequence of worlds linearly ordered, and a path can be contained in another path.  
Paths will be used below  to define increasing and decreasing of consistency  and certification 
(Definition~\ref{def.inc.decr}).

 \mh (iii) The  order $<$ applies only to the
 reference worlds of the structures in $PAT$. 
 In each structure $M$, the reference world $w_0$ can intuitively be interpreted   as a sort of 
 `central  database' that accesses every other database. 

 \mh (iv)  The value $\tau(\Box A, t)$ in a structure $M^{\mathsf{T}}$ is  the value of $\Box A$ at the reference world of the structure assigned to $t$ in $M^{\mathsf{T}}$. 
 The values of $({\ff{\Box A}},t)$ and $({\pp{\Box A}},t)$ are given by the infimum of the values of 
 ${\Box A}$ after  and before $t$ respectively.

\mmm

\begin{example} Consider the  set  
 $PAT=\{ \pair{M_1, w_1} , \pair{M_2, w_3}, \pair{M_3, w_5}, \pair{M_4, w_7}  \}$ of pointed many-logic structures.  

\mmm 

\noindent   {\footnotesize \begin{tabular}{c c c c}
\input{ex/ex10a}  &
  \input{ex/ex10b} &
  \input{ex/ex10c} &
  \input{ex/ex10d}
  \end{tabular}}


\noindent \begin{multicols}{4}
\scriptsize 
\mh $\pair{M_1, w_1} $

\smallskip

\mh $W=\{ w_1, w_2\}$

\mh  $I(w_1)= L6 , I(w_2)=\ns $

\mh $v_{w_1}(p)=T_0, v_{w_2}(p)=\nei $ 

\mh $v_{w_1}(q)=F_0, v_{w_2}(q)=\nei $ 

\mh $ v_{w_1}(\Box p)=\nei  $ 

\mh $ v_{w_1}(\Box q)=F_0  $ 


\mh $\pair{M_2, w_3} $

\smallskip

\mh $W=\{ w_3, w_4\}$

\mh  $I(w_3)=  L6, I(w_4)=\ns  $

\mh $v_{w_3}(p)=T_0, v_{w_4}(p)=F $ 

\mh $v_{w_3}(q)=F_0, v_{w_4}(q)=T $

\mh $ v_{w_3}(\Box p)=F  $ 

\mh $ v_{w_3}(\Box q)=F_0  $ 


\mh $\pair{M_3, w_5} $

\smallskip

\mh $W=\{ w_5, w_6\}$

\mh  $I(w_5)=  L6, I(w_6)=\ns  $

\mh $v_{w_5}(p)=T, v_{w_6}(p)={T}$ 

\mh $v_{w_5}(q)=F, v_{w_6}(q)= {F} $

\mh $v_{w_5}(\Box p)=T $ 

\mh $ v_{w_4}(\Box q)=F $ 



\mh $\pair{M_4, w_7} $

\smallskip

\mh $W=\{ w_7, w_8 \}$

\mh  $I(w_7)=  L6, I(w_8)=\ns  $

\mh $v_{w_5}(p)=\nei , v_{w_6}(p)=\nei $ 

\mh $v_{w_5}(q)=\nei , v_{w_6}(q)= \nei $

\mh $v_{w_5}(\Box p)=\nei  $ 

\mh $ v_{w_4}(\Box q)=\nei  $

\end{multicols}

\mm\mh  Consider the  $T$-evolution structure $M^{\mathsf{T}}$: 

\m 
\begin{figure}[H]
    \centering
\input{ex/ex.T}
    
\end{figure}

\m 

\mh  $T=\{ 1,2,3,4\}$, $g(t_1)=M_1, \; g(t_2)=M_2, \; g(t_3)=M_3, \; g(t_4)=M_4 $.

\begin{multicols}{4}
    \small
    \mh   $\tau(\pp (\Box p,t_1))=\nei $ 

\mh   $\tau(\pp (\Box p,t_2))=F$ 

\mh $\tau(\pp (\Box p,t_3))=F$ 

\mh $\tau(\pp (\Box p,t_4))=F$ 


\mh  $\tau(\ff (\Box p,t_1))=F$ 

\mh   $\tau(\ff  (\Box p,t_2))=F$ 

\mh $\tau(\ff  (\Box p,t_3))=T$

\mh $\tau(\ff  (\Box p,t_4))=\nei $ 


\mh  $\tau(\pp (\Box q,t_1))=F$ 

\mh   $\tau(\pp (\Box q,t_2))=F_0$ 

\mh $\tau(\pp (\Box q,t_3))=F$

\mh $\tau(\pp (\Box q,t_4))=\nei $ 


\mh  $\tau(\ff (\Box q,t_1))=\nei $ 

\mh   $\tau(\ff  (\Box q,t_2))=F$ 

\mh $\tau(\ff  (\Box q,t_3))=\nei $ 

\mh $\tau(\ff  (\Box q,t_4))=F$ 
\end{multicols}

 \end{example}

    The intuitive interpretation can be as follows. The path  $t_1,t_3$ represents a circumstance in which all the 
information available in the reference world $w_1$ of the structure $M_1$  has been refined. 
The resulting database, which corresponds to reference world $w_5$ of the structure $M_3$, 
has  certified information that is the same in the worlds $w_5$ and $w_6$.    
 The uncertainty has decreased. 

The path  $t_1,t_2,t_4$, on the other hand, represents an inverse  circumstance in which the uncertainty has increased.  
 Different information came from the database represented   by the worlds $w_4$ in the stages $t_2$.  
 This uncertainty is represented in the stage $t_4$ by the value \nei\ in both worlds.

\vv\subsection {Measuring inconsistency}

 The quality of information in a database depends on both its inconsistencies and the reliability of the information it contains.   The logic \letkp\      distinguishes reliable from unreliable information by means of the connective \con, which is intuitively understood  
 as a form of data certification. 
  
 In this section, we define measures of inconsistency and certification within a structure  $M$, 
 based  on the number of occurrences of  pairs of sentences $p, \neg p$ as well as instances of $\con p$ in the respective $W$-family $K_w$ (cf.~\cite{CostaM17,DianaManuel18,Grant06,grant18}). 
  Although these measures are related, 
they remain independent in the sense that, except in the logics \jt\ and \lj,  it is possible   to have no contradiction regarding $p$,   yet $\con p$ does not hold.
We further extend these concepts to \textit{T}-evolution structures and introduce the notions of consistency improvement and certification improvement over time.

 \begin{definition} \label{def.measurin.inconsistency} \  
  
Let $M = \pair{W, R, I, v}$ be a  many-logic modal structure where $|W|$ is finite.

\m\mh (i) The {\em inconsistencies at $w \in W$}, $inc(w)$, is the number of variables $p$ such that $p,\neg p$ are in the 
database, that is:

\m  $inc(w)= |\{p: v_w(p)=\bo \}|,$ or alternatively:
$ inc(w)= |\{p:  p \in K_w \text{ and } \neg p \in K_w \}|.$

\mm \mh (ii) The {\em certificated data at $w$}, $\cert(w)$, is  the number of variables $p$ such that $\circ p$ is in the 
database, that is: 

\m  $\cert(w)=|\{p: v_w(p)=T \text{ or }v_w(p)=F\}|,$  or alternatively:
$\cert(w)=|\{p: \con  p \in K_w \}|.$

\m \mh (iii) The {\em inconsistencies in $M$}, $inc(M)$, is the sum of the inconsistencies in the worlds $w$ 
such that $w \in W$,  that is:    

\vv $$inc({M})=    {\sum_{w\in W}\inc(w)}$$
\vv

\m\mh (iv) The {\em certificated data in $M$}, $\cert(M)$,  is the sum of the certified data in the worlds $w$ 
such that $w \in W$,  that is:    

\vv \vv 
$$cert({M})={\sum_{w\in W}\cert(w)}$$
\vv \vv

\mh (v) The {\em relative inconsistency of $M$}, $rinc(M)$, is the sum of the inconsistencies in $M$ divided by 
the cardinality of $W$,   that is:    

\vv \vv 
$$\rinc(M)= \frac{ \inc(M) }{|W|}$$
\vv \vv

\mh (vi) The {\em relative certified data in $M$}, $rcert(M)$,  is the sum of the certified data in $M$ divided by 
the cardinality of $W$,   that is:    

\vv \vv 

$$rcert({M})=\frac{ \cert(M) }{|W|}$$
\vv \vv

 \end{definition}

In particular cases data do not have the same relevance and we must consider this in the previous definition. Let $\Lambda\subset N$ be a scale of relevance for data and $\mu :Var\to \Lambda $.
\begin{definition}
   \m\mh  
Let $M = \pair{W, R, I, v}$ be a  many-logic modal structure where $|W|$ is finite and $\mu :Var\to \Lambda\subset N$. 
\\
\m\mh (i) The {\em inconsistencies at $w \in W$ modulo $\mu$}, $inc^\mu(w)$, is the $\mu$-weighted sum of the number of variables $p$ such that $p,\neg p$ are in the 
database, that is:    
$$inc^\mu(w)= \sum_{ v_w(p)=\bo } \mu(p),$$
\\
\m\mh (ii) The {\em certificated data at $w$ modulo $\mu$}, $\cert^\mu(w)$, is  the $\mu$-weighted sum of the  number of variables $p$ such that $\circ p$ is in the 
database, that is: 
$$\cert^\mu(w)=\sum_{ v_w(p)=T \text{ or }v_w(p)=F } \mu(p),$$

\end{definition} 

The other definitions can also be extended with weights for the proposition symbols.

\mh
  The point of the notions of relative inconsistency and certification  is to obtain a normalized result   proportional to the size of $W$.   
Note that the Definition~\ref{def.measurin.inconsistency} above imposes  no restrictions on the cardinality of the  sets 
$\{p: v_w(p)=\bo\}$,  $\{p: v_w(p)=T\}$, and   $\{p: v_w(p)=F\}$. However, if they are not finite,  
the values of $rcert_{M}$ and $rinc_{M}$ will be $\omega$.

\begin{proposition}  Let $M=\pair{W, R, I,v}$ and $M'=\pair{W', R', I',v'}$ be two compatible many-logic modal  structure. Then:  
 
\mm 
(i)   $cert(M)\leq cert(M//M')$, 

\m (ii) $inc(M)\leq inc(M//M')$, 

\m 
(iii) If $W\cap W'=\emptyset$ then $inc(M//M')= inc(M)+inc(M')$ and $cert(M//M')= cert(M)+cert(M')$. 
 \end{proposition}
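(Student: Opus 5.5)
The plan is to reduce every item to a statement about single worlds, using the alternative description of merging in terms of the $W$-families: $K_{w_{M//M'}} = K_{w_M}\cup K_{w_{M'}}$ for $w\in W\cap W'$, and $K_{w_{M//M'}}$ equal to $K_{w_M}$ or $K_{w_{M'}}$ for worlds lying in only one of the two structures. Together with the alternative formulations in Definition~\ref{def.measurin.inconsistency}, namely $\inc(w)=|\{p: p\in K_w \text{ and } \neg p\in K_w\}|$ and $\cert(w)=|\{p:\con p\in K_w\}|$, the counting functions become monotone with respect to inclusion of the sets $K_w$. Compatibility of $M$ and $M'$ guarantees that $M//M'$ is defined, so $\bar v$ never takes the undefined entry $X$ of Table~\ref{table.valuation.merging}.

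For (i) and (ii), fix $w\in W$. If $w\notin W'$, then $\bar v_w=v_w$, so $\cert$ and $\inc$ at $w$ are unchanged. If $w\in W\cap W'$, then $K_{w_M}\subseteq K_{w_{M//M'}}$. Hence every $p$ with $\con p\in K_{w_M}$ also has $\con p\in K_{w_{M//M'}}$. Similarly, every $p$ with $p,\neg p\in K_{w_M}$ keeps both literals after merging. This gives $\cert_M(w)\le \cert_{M//M'}(w)$ and $\inc_M(w)\le\inc_{M//M'}(w)$.

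Summing over $W\subseteq W\cup W'$ and using that all summands are nonnegative gives $\cert(M)\le\cert(M//M')$ and $\inc(M)\le \inc(M//M')$. The worlds in $W'\setminus W$ only add further nonnegative terms. Strictly, the inequalities are in $\mathbb{N}\cup\{\omega\}$ if the counted sets are infinite, and the argument is unchanged there.

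One should also check this against the valuation table rather than only the $K_w$ formulation, since the paper defines $\bar v$ via Table~\ref{table.valuation.merging}. I would verify row by row that every value carrying $\con p$ (that is, $T$ or $F$) produces, with any compatible partner, a merged value $T$ or $F$. Likewise, $\bo$ merged with any compatible value yields $\bo$. The row for $\bo$ gives $\bo$ in every defined entry. The rows for $T$ and $F$ give only $T$ and $F$ respectively in defined entries. This agreement between the table and the union description is the main point needing care; it is a finite inspection.

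For (iii), if $W\cap W'=\emptyset$, then $W\cup W'$ is a disjoint union. Each $w\in W$ keeps $v_w$ and each $w\in W'$ keeps $v'_w$. The sums defining $\inc(M//M')$ and $\cert(M//M')$ therefore split as the sum over $W$ plus the sum over $W'$, giving $\inc(M)+\inc(M')$ and $\cert(M)+\cert(M')$.
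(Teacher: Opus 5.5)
Your proposal is correct and follows the paper's route. The paper proves (i) from $|\{p:\con p\in K_w\}|\le|\{p:\con p\in K_w\cup K'_w\}|$ and leaves (ii) and (iii) to the reader; your per-world monotonicity, summation over $W\subseteq W\cup W'$, and disjoint splitting for (iii) supply exactly those omitted steps.

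Your row check in Table~\ref{table.valuation.merging} ($T\mapsto T$, $F\mapsto F$, $\bo\mapsto\bo$ in every defined entry) is what secures the inclusions for the valuation-based definitions of $\inc$ and $\cert$, so rely on that check rather than on full agreement with the union description. With the printed clause for $\nei$, which puts $\con p$ into $K_w$, the union description actually disagrees with the table, e.g.\ for $T_0$ merged with $\nei$.
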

\begin{proof}
  Item (i)  follows from the fact that  $|\{p: \con  p \in K_w \}|  \leq |\{p: \con  p \in K_w \cup K'_w \}|$. 
  The remaining cases are left to the reader. 

\end{proof}

Below we extend these ideas to $T$-evolution structures.

\begin{definition} \label{def.Relative.inconsistency}  
  Let $(T,<)$ be a flow of time and  $M^{\mathsf{T}}$ a $T$-evolution structure, where $|W_t|$ is  finite for every $t\in T$.

 (i) The {\em relative inconsistent measure at t} is the function $\rinc_{M^{\mathsf{T}}} :  T \to [0,1]$ defined as

$$rinc_{M^{\mathsf{T}}}(t)= rinc(g(t)) $$



(ii) The {\em certification  ratio} is the function  $rcert_{M^{\mathsf{T}}} : T \to [0,1]$

$$rcert_{M^{\mathsf{T}}}(t)= rinc(g(t)) $$

\end{definition}

\mmm 
Some useful properties of $T$-evolution structures may be defined.

\newpage

\begin{definition} \label{def.inc.decr} Let $T'$ be a path in a $T$-evolution structure. We say that:

\begin{itemize} \setl

   \item[(i)]   \emph{Consistency increases} along $T'$ if and only if
 for any $i,j\in T'$, $i<j$ implies  $rinc_{M^{\mathsf{T}}}(i) > rinc_{M^{\mathsf{T}}}(j)$, 


\item[(ii)]   \emph{Consistency decreases} along $T'$ if and only if
for any $i,j\in T'$, $i<j$ implies  $rinc_{M^{\mathsf{T}}}(i) < rinc_{M^{\mathsf{T}}}(j)$,

     \item[(iii)]   \emph{Certification increases} along $T'$ if and only if
for any $i,j\in T'$, $i<j$ implies  $rcert_{M^{\mathsf{T}}}(i) < rcert_{M^{\mathsf{T}}}(j)$, 


\item[(iv)] \emph{Certification decreases} along $T'$ if and only if
 for any $i,j\in T'$, $i<j$ implies  $rcert_{M^{\mathsf{T}}}(i) > rcert_{M^{\mathsf{T}}}(j)$. 


\end{itemize}


\begin{itemize} \setl 
\item[] We say that a $T$-evolution structure: 
\item[(v)]    \emph{Improves consistency}  if and only if  
  for any $i,j\in \T$, $i<j$ implies  $\rinc(i) > \rinc(j)$, 

\item[(vi)]  \emph{Improves certification}    if and only if
  for any $i,j\in \T$, $i<j$ implies  $rcert_{M^{\mathsf{T}}}(i) < rcert_{M^{\mathsf{T}}}(j)$.

\end{itemize}

\end{definition}


\m\mh  In the example below we illustrate the measures  of inconsistency, certification, 
relative inconsistency and certified data in a \textit{T}-evolution structure.

 \begin{example} 
 Consider the set $PAT=\{ \pair{M_1,w_1},\pair{M_2,w_3}, \pair{M_3,w_5} \}$

\begin{center}
\begin{tabular}{c c c}
    \input{ex/ex.last.1}&\input{ex/ex.last.2}&\input{ex/ex.last.3}
\end{tabular}
\end{center}

{\footnotesize 
\begin{multicols}{3}

$M_1$
 
\smallskip

$I(w_1)= L6, I(w_2)= \bs  $ 

$K_{w_1} = \{ \con p, p, q\} $ 

$ K_{w_2} = \{   p, \neg p, q, \neg q\}$

$inc(w_1)= 0, inc(w_2)=2$ 

$cert(w_1)= 1, cert(w_2)=0 $  

$inc(M_1)=2, cert(M_1)=1$

$rinc(M_1)=1, rcert(M_1)=0.5$

$M_2$

\smallskip

$ I(w_3)= L6, I(w_4)= \bs  $

$K_{w_3} = \{  p, \neg p, q \} $  

$K_{w_4} = \{   p, \neg p,   q, \neg q\}$

$inc(w_3)=1, inc(w_4)=2  $

$ cert(w_3)=0, cert(w_4)=0  $

$inc(M_2)=3, cert(M_2)=0$

$rinc(M_2)=1.5, rcert(M_2)=0$

$M_3$

\smallskip

$ I(w_3)= L6, I(w_4)= \bs $

$K_{w_1} = \{ \con p, p, \con q, q\} $ 

$ K_{w_2} = \{   \con p,  p, \con q,  q\}$

$inc(w_1)= 0, inc(w_2)=0$ 

$cert(w_1)= 2, cert(w_2)=2 $  

$inc(M_3)=0, cert(M_3)=4$

$rinc(M_3)=0, rcert(M_3)=2$

\end{multicols}
}

\m\mh Now consider the \textit{T}-evolution structure $M^{\mathsf{T}}$ where $T=\{ 1,2,3\}$, $g(t_1)=M_1, \; g(t_2)=M_2, \; g(t_3)=M_3.$

\begin{figure}[H]
    \centering
    \input{ex/ex.tree.last}
\end{figure}

\begin{multicols}{2}
    
$rinc(t_1)=1 $  

$rinc(t_2)=1.5 $ 

$ rinc(t_3)=0  $  

$rcert(t_1)=0.5 $ 

$ rcert(t_2)=0 $ 

$ rcert(t_3)=2.  $  
\end{multicols}
 
\m \mh
Consistency increases and certification decreases along the path $\T' = \{t_1, t_2\}$, while consistency decreases and certification increases along the path $\T'' = \{t_1, t_3\}$. However, with regard to the \textit{T}-structure $M^{\mathsf{T}}$, neither consistency nor certification increases or decreases.

 \end{example}

 \subsection*{Final remarks}
In this paper, we discussed mechanisms for database merging and temporal evolution, emphasizing how informational refinement may be modeled through changes in the underlying lattices. In this setting, inconsistency measures play an important role in characterizing the informational state of databases and in guiding their evolution toward more consistent and certified forms of knowledge.

Future research includes the investigation of refinement processes capable of capturing da\-ta\-ba\-se evolution in terms of certification and inconsistency levels through (T)-evolution structures.
Another relevant direction concerns the extension of the lattice (L6) by iteratively introducing new maximal and minimal elements, allowing the representation of finer degrees of certification. Determining the logics generated by the corresponding filters in these extended structures remains an open and promising problem.

\medskip

\noindent {\bf Acknowledgments.}
This work is supported by CIDMA (https://ror.org/05pm2mw36) through the Portuguese Foundation for Science
and Technology (FCT, https://ror.org/00snfqn58), under grant UID/04106/2025 (https://doi.org/10.54499/UID/04106/2025) and by Project Banksy with reference COMPETE2030-FEDER-00892000. The second author acknowledges support from the National Council for Scientific
and Technological Development (CNPq), Brazil, research grant 307889/2025-4.

  \bibliographystyle{eptcs}        
\bibliography{refs.main}

\end{document}

%% file: ex/example6.1b.tex
\tikzset{every picture/.style={line width=0.75pt}} 

\begin{tikzpicture}[x=0.75pt,y=0.75pt,yscale=-0.5,xscale=0.5]

\draw   (81,1493.88) .. controls (81,1480.07) and (92.19,1468.88) .. (106,1468.88) .. controls (119.81,1468.88) and (131,1480.07) .. (131,1493.88) .. controls (131,1507.69) and (119.81,1518.88) .. (106,1518.88) .. controls (92.19,1518.88) and (81,1507.69) .. (81,1493.88) -- cycle ;
\draw   (255,1493.88) .. controls (255,1480.07) and (266.19,1468.88) .. (280,1468.88) .. controls (293.81,1468.88) and (305,1480.07) .. (305,1493.88) .. controls (305,1507.69) and (293.81,1518.88) .. (280,1518.88) .. controls (266.19,1518.88) and (255,1507.69) .. (255,1493.88) -- cycle ;
\draw   (167.67,1612.21) .. controls (167.67,1598.4) and (178.86,1587.21) .. (192.67,1587.21) .. controls (206.47,1587.21) and (217.67,1598.4) .. (217.67,1612.21) .. controls (217.67,1626.02) and (206.47,1637.21) .. (192.67,1637.21) .. controls (178.86,1637.21) and (167.67,1626.02) .. (167.67,1612.21) -- cycle ;
\draw    (115.93,1519.43) -- (175.94,1593.44) ;
\draw [shift={(114.67,1517.88)}, rotate = 50.96] [color={rgb, 255:red, 0; green, 0; blue, 0 }  ][line width=0.75]    (10.93,-3.29) .. controls (6.95,-1.4) and (3.31,-0.3) .. (0,0) .. controls (3.31,0.3) and (6.95,1.4) .. (10.93,3.29)   ;
\draw    (268.64,1519.54) -- (212.6,1596.77) ;
\draw [shift={(269.82,1517.92)}, rotate = 125.96] [color={rgb, 255:red, 0; green, 0; blue, 0 }  ][line width=0.75]    (10.93,-3.29) .. controls (6.95,-1.4) and (3.31,-0.3) .. (0,0) .. controls (3.31,0.3) and (6.95,1.4) .. (10.93,3.29)   ;
\draw    (188.56,1636.62) .. controls (154.92,1676.96) and (233.64,1684.55) .. (208.17,1635.33) ;
\draw [shift={(207.36,1633.82)}, rotate = 61.28] [color={rgb, 255:red, 0; green, 0; blue, 0 }  ][line width=0.75]    (10.93,-3.29) .. controls (6.95,-1.4) and (3.31,-0.3) .. (0,0) .. controls (3.31,0.3) and (6.95,1.4) .. (10.93,3.29)   ;
\draw    (111.36,1469.33) .. controls (138.3,1419.86) and (49.71,1434.4) .. (94.37,1470.24) ;
\draw [shift={(95.76,1471.33)}, rotate = 217.44] [color={rgb, 255:red, 0; green, 0; blue, 0 }  ][line width=0.75]    (10.93,-3.29) .. controls (6.95,-1.4) and (3.31,-0.3) .. (0,0) .. controls (3.31,0.3) and (6.95,1.4) .. (10.93,3.29)   ;
\draw    (293.91,1471.36) .. controls (325.59,1423.84) and (247.5,1432.52) .. (273.48,1467.85) ;
\draw [shift={(274.3,1468.93)}, rotate = 231.92] [color={rgb, 255:red, 0; green, 0; blue, 0 }  ][line width=0.75]    (10.93,-3.29) .. controls (6.95,-1.4) and (3.31,-0.3) .. (0,0) .. controls (3.31,0.3) and (6.95,1.4) .. (10.93,3.29)   ;

\draw (90,1481.73) node [anchor=north west][inner sep=0.75pt]   [align=left] {$w_2$};
\draw (266,1482.73) node [anchor=north west][inner sep=0.75pt]   [align=left] {$w_3$};
\draw (177.33,1600.06) node [anchor=north west][inner sep=0.75pt]   [align=left] {$w_1$};
\draw (133,1537.39) node   [align=left] {\begin{minipage}[lt]{68pt}\setlength\topsep{0pt}
$\lw$
\end{minipage}};
\draw (282.33,1525.06) node [anchor=north west][inner sep=0.75pt]   [align=left] {$\bs$};
\draw (314.67,1615.06) node   [align=left] {\begin{minipage}[lt]{68pt}\setlength\topsep{0pt}
$L6$
\end{minipage}};

\end{tikzpicture}

%% file: ex/ex.6.1.OK.tex
\tikzset{every picture/.style={line width=0.75pt}} 

\begin{tikzpicture}[x=0.75pt,y=0.75pt,yscale=-0.7,xscale=0.7]

\draw   (13,171) .. controls (13,160.51) and (21.51,152) .. (32,152) .. controls (42.49,152) and (51,160.51) .. (51,171) .. controls (51,181.49) and (42.49,190) .. (32,190) .. controls (21.51,190) and (13,181.49) .. (13,171) -- cycle ;
\draw    (31,80) -- (31.97,149) ;
\draw [shift={(32,151)}, rotate = 269.19] [color={rgb, 255:red, 0; green, 0; blue, 0 }  ][line width=0.75]    (10.93,-3.29) .. controls (6.95,-1.4) and (3.31,-0.3) .. (0,0) .. controls (3.31,0.3) and (6.95,1.4) .. (10.93,3.29)   ;
\draw    (32,151) -- (31.03,82) ;
\draw [shift={(31,80)}, rotate = 89.19] [color={rgb, 255:red, 0; green, 0; blue, 0 }  ][line width=0.75]    (10.93,-3.29) .. controls (6.95,-1.4) and (3.31,-0.3) .. (0,0) .. controls (3.31,0.3) and (6.95,1.4) .. (10.93,3.29)   ;
\draw    (23,39) .. controls (-8.68,10.29) and (69.41,-0.78) .. (39.93,39.76) ;
\draw [shift={(39,41)}, rotate = 307.3] [color={rgb, 255:red, 0; green, 0; blue, 0 }  ][line width=0.75]    (10.93,-3.29) .. controls (6.95,-1.4) and (3.31,-0.3) .. (0,0) .. controls (3.31,0.3) and (6.95,1.4) .. (10.93,3.29)   ;
\draw    (23,189) .. controls (15.16,238) and (62.06,213.05) .. (44.18,192.27) ;
\draw [shift={(43,191)}, rotate = 45] [color={rgb, 255:red, 0; green, 0; blue, 0 }  ][line width=0.75]    (10.93,-3.29) .. controls (6.95,-1.4) and (3.31,-0.3) .. (0,0) .. controls (3.31,0.3) and (6.95,1.4) .. (10.93,3.29)   ;
\draw   (12,59) .. controls (12,48.51) and (20.51,40) .. (31,40) .. controls (41.49,40) and (50,48.51) .. (50,59) .. controls (50,69.49) and (41.49,78) .. (31,78) .. controls (20.51,78) and (12,69.49) .. (12,59) -- cycle ;

\draw (60,56) node [anchor=north west][inner sep=0.75pt]   [align=left] {\cw};
\draw (60,153) node [anchor=north west][inner sep=0.75pt]   [align=left] {\lw};
\draw (19,161) node [anchor=north west][inner sep=0.75pt]   [align=left] {$w_1$};
\draw (19,50) node [anchor=north west][inner sep=0.75pt]   [align=left] {$w_2$};

\end{tikzpicture}

%% file: ex/example6.1c.tex
\tikzset{every picture/.style={line width=0.75pt}} 

\begin{tikzpicture}[x=0.75pt,y=0.75pt,yscale=-0.6,xscale=0.6]

\draw   (81,1493.88) .. controls (81,1480.07) and (92.19,1468.88) .. (106,1468.88) .. controls (119.81,1468.88) and (131,1480.07) .. (131,1493.88) .. controls (131,1507.69) and (119.81,1518.88) .. (106,1518.88) .. controls (92.19,1518.88) and (81,1507.69) .. (81,1493.88) -- cycle ;
\draw   (255,1493.88) .. controls (255,1480.07) and (266.19,1468.88) .. (280,1468.88) .. controls (293.81,1468.88) and (305,1480.07) .. (305,1493.88) .. controls (305,1507.69) and (293.81,1518.88) .. (280,1518.88) .. controls (266.19,1518.88) and (255,1507.69) .. (255,1493.88) -- cycle ;
\draw   (167.67,1612.21) .. controls (167.67,1598.4) and (178.86,1587.21) .. (192.67,1587.21) .. controls (206.47,1587.21) and (217.67,1598.4) .. (217.67,1612.21) .. controls (217.67,1626.02) and (206.47,1637.21) .. (192.67,1637.21) .. controls (178.86,1637.21) and (167.67,1626.02) .. (167.67,1612.21) -- cycle ;

\draw    (115.93,1519.43) -- (174.05,1591.11) ;
\draw [shift={(175.94,1593.44)}, rotate = 230.96] [fill={rgb, 255:red, 0; green, 0; blue, 0 }  ][line width=0.08]  [draw opacity=0] (8.93,-4.29) -- (0,0) -- (8.93,4.29) -- cycle    ;
\draw [shift={(114.67,1517.88)}, rotate = 50.96] [color={rgb, 255:red, 0; green, 0; blue, 0 }  ][line width=0.75]    (10.93,-3.29) .. controls (6.95,-1.4) and (3.31,-0.3) .. (0,0) .. controls (3.31,0.3) and (6.95,1.4) .. (10.93,3.29)   ;

\draw    (268.64,1519.54) -- (212.6,1596.77) ;
\draw [shift={(269.82,1517.92)}, rotate = 125.96] [color={rgb, 255:red, 0; green, 0; blue, 0 }  ][line width=0.75]    (10.93,-3.29) .. controls (6.95,-1.4) and (3.31,-0.3) .. (0,0) .. controls (3.31,0.3) and (6.95,1.4) .. (10.93,3.29)   ;

\draw    (188.56,1636.62) .. controls (154.92,1676.96) and (233.64,1684.55) .. (208.17,1635.33) ;
\draw [shift={(207.36,1633.82)}, rotate = 61.28] [color={rgb, 255:red, 0; green, 0; blue, 0 }  ][line width=0.75]    (10.93,-3.29) .. controls (6.95,-1.4) and (3.31,-0.3) .. (0,0) .. controls (3.31,0.3) and (6.95,1.4) .. (10.93,3.29)   ;
\draw    (111.36,1469.33) .. controls (138.3,1419.86) and (49.71,1434.4) .. (94.37,1470.24) ;
\draw [shift={(95.76,1471.33)}, rotate = 217.44] [color={rgb, 255:red, 0; green, 0; blue, 0 }  ][line width=0.75]    (10.93,-3.29) .. controls (6.95,-1.4) and (3.31,-0.3) .. (0,0) .. controls (3.31,0.3) and (6.95,1.4) .. (10.93,3.29)   ;
\draw    (293.91,1471.36) .. controls (325.59,1423.84) and (247.5,1432.52) .. (273.48,1467.85) ;
\draw [shift={(274.3,1468.93)}, rotate = 231.92] [color={rgb, 255:red, 0; green, 0; blue, 0 }  ][line width=0.75]    (10.93,-3.29) .. controls (6.95,-1.4) and (3.31,-0.3) .. (0,0) .. controls (3.31,0.3) and (6.95,1.4) .. (10.93,3.29)   ;

\draw (90,1481.73) node [anchor=north west][inner sep=0.75pt]   [align=left] {$w_2$};
\draw (266,1482.73) node [anchor=north west][inner sep=0.75pt]   [align=left] {$w_3$};
\draw (177.33,1600.06) node [anchor=north west][inner sep=0.75pt]   [align=left] {$w_1$};
\draw (140,1537.39) node   [align=left] {\begin{minipage}[lt]{68pt}\setlength\topsep{0pt}
$\bw$
\end{minipage}};
\draw (282.33,1525.06) node [anchor=north west][inner sep=0.75pt]   [align=left] {$B3^s$};
\draw (310.67,1620.06) node   [align=left] {\begin{minipage}[lt]{68pt}\setlength\topsep{0pt}
$L6$
\end{minipage}};

\end{tikzpicture}

%% file: ex/ex10a.tex
\tikzset{every picture/.style={line width=0.75pt}} 

\begin{tikzpicture}[x=0.75pt,y=0.75pt,yscale=-0.5,xscale=0.5]

\draw   (157,1019.37) .. controls (157,1005.57) and (168.19,994.37) .. (182,994.37) .. controls (195.81,994.37) and (207,1005.57) .. (207,1019.37) .. controls (207,1033.18) and (195.81,1044.37) .. (182,1044.37) .. controls (168.19,1044.37) and (157,1033.18) .. (157,1019.37) -- cycle ;
\draw    (180.75,1046.61) -- (181,1157.37) ;
\draw [shift={(180.75,1044.61)}, rotate = 89.87] [color={rgb, 255:red, 0; green, 0; blue, 0 }  ][line width=0.75]    (10.93,-3.29) .. controls (6.95,-1.4) and (3.31,-0.3) .. (0,0) .. controls (3.31,0.3) and (6.95,1.4) .. (10.93,3.29)   ;
\draw   (157,1182.37) .. controls (157,1168.57) and (168.19,1157.37) .. (182,1157.37) .. controls (195.81,1157.37) and (207,1168.57) .. (207,1182.37) .. controls (207,1196.18) and (195.81,1207.37) .. (182,1207.37) .. controls (168.19,1207.37) and (157,1196.18) .. (157,1182.37) -- cycle ;
\draw    (187.36,994.33) .. controls (214.3,944.86) and (125.71,959.39) .. (170.37,995.23) ;
\draw [shift={(171.76,996.33)}, rotate = 217.44] [color={rgb, 255:red, 0; green, 0; blue, 0 }  ][line width=0.75]    (10.93,-3.29) .. controls (6.95,-1.4) and (3.31,-0.3) .. (0,0) .. controls (3.31,0.3) and (6.95,1.4) .. (10.93,3.29)   ;
\draw    (174.06,1206.67) .. controls (140.42,1247) and (219.14,1254.6) .. (193.67,1205.38) ;
\draw [shift={(192.86,1203.87)}, rotate = 61.28] [color={rgb, 255:red, 0; green, 0; blue, 0 }  ][line width=0.75]    (10.93,-3.29) .. controls (6.95,-1.4) and (3.31,-0.3) .. (0,0) .. controls (3.31,0.3) and (6.95,1.4) .. (10.93,3.29)   ;
\draw (166,1007.22) node [anchor=north west][inner sep=0.75pt]   [align=left] {$w_2$};
\draw (298.33,1007.22) node   [align=left] {\begin{minipage}[lt]{68pt}\setlength\topsep{0pt}
$\ns$
\end{minipage}};
\draw (298.33,1170.3) node   [align=left] {\begin{minipage}[lt]{68pt}\setlength\topsep{0pt}
$L6$
\end{minipage}};
\draw (166,1170.3) node [anchor=north west][inner sep=0.75pt]   [align=left] {$w_1$};

\end{tikzpicture}

%% file: ex/ex10b.tex
 
\tikzset{every picture/.style={line width=0.75pt}} 

\begin{tikzpicture}[x=0.75pt,y=0.75pt,yscale=-0.5,xscale=0.5]

\draw   (157,1019.37) .. controls (157,1005.57) and (168.19,994.37) .. (182,994.37) .. controls (195.81,994.37) and (207,1005.57) .. (207,1019.37) .. controls (207,1033.18) and (195.81,1044.37) .. (182,1044.37) .. controls (168.19,1044.37) and (157,1033.18) .. (157,1019.37) -- cycle ;
\draw    (180.75,1046.61) -- (181,1157.37) ;
\draw [shift={(180.75,1044.61)}, rotate = 89.87] [color={rgb, 255:red, 0; green, 0; blue, 0 }  ][line width=0.75]    (10.93,-3.29) .. controls (6.95,-1.4) and (3.31,-0.3) .. (0,0) .. controls (3.31,0.3) and (6.95,1.4) .. (10.93,3.29)   ;
\draw   (157,1182.37) .. controls (157,1168.57) and (168.19,1157.37) .. (182,1157.37) .. controls (195.81,1157.37) and (207,1168.57) .. (207,1182.37) .. controls (207,1196.18) and (195.81,1207.37) .. (182,1207.37) .. controls (168.19,1207.37) and (157,1196.18) .. (157,1182.37) -- cycle ;
\draw    (187.36,994.33) .. controls (214.3,944.86) and (125.71,959.39) .. (170.37,995.23) ;
\draw [shift={(171.76,996.33)}, rotate = 217.44] [color={rgb, 255:red, 0; green, 0; blue, 0 }  ][line width=0.75]    (10.93,-3.29) .. controls (6.95,-1.4) and (3.31,-0.3) .. (0,0) .. controls (3.31,0.3) and (6.95,1.4) .. (10.93,3.29)   ;
\draw    (174.06,1206.67) .. controls (140.42,1247) and (219.14,1254.6) .. (193.67,1205.38) ;
\draw [shift={(192.86,1203.87)}, rotate = 61.28] [color={rgb, 255:red, 0; green, 0; blue, 0 }  ][line width=0.75]    (10.93,-3.29) .. controls (6.95,-1.4) and (3.31,-0.3) .. (0,0) .. controls (3.31,0.3) and (6.95,1.4) .. (10.93,3.29)   ;

\draw (166,1007.22) node [anchor=north west][inner sep=0.75pt]   [align=left] {$w_4$};
\draw (298.33,1007.22) node   [align=left] {\begin{minipage}[lt]{68pt}\setlength\topsep{0pt}
$\ns$
\end{minipage}};
\draw (298.33,1170.3) node   [align=left] {\begin{minipage}[lt]{68pt}\setlength\topsep{0pt}
$L6$
\end{minipage}};
\draw (166,1170.3) node [anchor=north west][inner sep=0.75pt]   [align=left] {$w_3$};

\end{tikzpicture}

%% file: ex/ex10c.tex
\tikzset{every picture/.style={line width=0.6pt}} 

\begin{tikzpicture}[x=0.75pt,y=0.75pt,yscale=-0.5,xscale=0.5]

\draw   (157,1019.37) .. controls (157,1005.57) and (168.19,994.37) .. (182,994.37) .. controls (195.81,994.37) and (207,1005.57) .. (207,1019.37) .. controls (207,1033.18) and (195.81,1044.37) .. (182,1044.37) .. controls (168.19,1044.37) and (157,1033.18) .. (157,1019.37) -- cycle ;
\draw    (180.75,1046.61) -- (181,1157.37) ;
\draw [shift={(180.75,1044.61)}, rotate = 89.87] [color={rgb, 255:red, 0; green, 0; blue, 0 }  ][line width=0.75]    (10.93,-3.29) .. controls (6.95,-1.4) and (3.31,-0.3) .. (0,0) .. controls (3.31,0.3) and (6.95,1.4) .. (10.93,3.29)   ;
\draw   (157,1182.37) .. controls (157,1168.57) and (168.19,1157.37) .. (182,1157.37) .. controls (195.81,1157.37) and (207,1168.57) .. (207,1182.37) .. controls (207,1196.18) and (195.81,1207.37) .. (182,1207.37) .. controls (168.19,1207.37) and (157,1196.18) .. (157,1182.37) -- cycle ;
\draw    (187.36,994.33) .. controls (214.3,944.86) and (125.71,959.39) .. (170.37,995.23) ;
\draw [shift={(171.76,996.33)}, rotate = 217.44] [color={rgb, 255:red, 0; green, 0; blue, 0 }  ][line width=0.75]    (10.93,-3.29) .. controls (6.95,-1.4) and (3.31,-0.3) .. (0,0) .. controls (3.31,0.3) and (6.95,1.4) .. (10.93,3.29)   ;
\draw    (174.06,1206.67) .. controls (140.42,1247) and (219.14,1254.6) .. (193.67,1205.38) ;
\draw [shift={(192.86,1203.87)}, rotate = 61.28] [color={rgb, 255:red, 0; green, 0; blue, 0 }  ][line width=0.75]    (10.93,-3.29) .. controls (6.95,-1.4) and (3.31,-0.3) .. (0,0) .. controls (3.31,0.3) and (6.95,1.4) .. (10.93,3.29)   ;
\draw (166,1007.22) node [anchor=north west][inner sep=0.75pt]   [align=left] {$w_6$};
\draw (298.33,1007.22) node   [align=left] {\begin{minipage}[lt]{68pt}\setlength\topsep{0pt}
$\ns$
\end{minipage}};
\draw (298.33,1170.3) node   [align=left] {\begin{minipage}[lt]{68pt}\setlength\topsep{0pt}
$L6$
\end{minipage}};
\draw (166,1170.3) node [anchor=north west][inner sep=0.75pt]   [align=left] {$w_5$};

\end{tikzpicture}

%% file: ex/ex10d.tex
\tikzset{every picture/.style={line width=0.75pt}} 

\begin{tikzpicture}[x=0.75pt,y=0.75pt,yscale=-0.5,xscale=0.5]

\draw   (157,1019.37) .. controls (157,1005.57) and (168.19,994.37) .. (182,994.37) .. controls (195.81,994.37) and (207,1005.57) .. (207,1019.37) .. controls (207,1033.18) and (195.81,1044.37) .. (182,1044.37) .. controls (168.19,1044.37) and (157,1033.18) .. (157,1019.37) -- cycle ;
\draw    (180.75,1046.61) -- (181,1157.37) ;
\draw [shift={(180.75,1044.61)}, rotate = 89.87] [color={rgb, 255:red, 0; green, 0; blue, 0 }  ][line width=0.75]    (10.93,-3.29) .. controls (6.95,-1.4) and (3.31,-0.3) .. (0,0) .. controls (3.31,0.3) and (6.95,1.4) .. (10.93,3.29)   ;
\draw   (157,1182.37) .. controls (157,1168.57) and (168.19,1157.37) .. (182,1157.37) .. controls (195.81,1157.37) and (207,1168.57) .. (207,1182.37) .. controls (207,1196.18) and (195.81,1207.37) .. (182,1207.37) .. controls (168.19,1207.37) and (157,1196.18) .. (157,1182.37) -- cycle ;
\draw    (187.36,994.33) .. controls (214.3,944.86) and (125.71,959.39) .. (170.37,995.23) ;
\draw [shift={(171.76,996.33)}, rotate = 217.44] [color={rgb, 255:red, 0; green, 0; blue, 0 }  ][line width=0.75]    (10.93,-3.29) .. controls (6.95,-1.4) and (3.31,-0.3) .. (0,0) .. controls (3.31,0.3) and (6.95,1.4) .. (10.93,3.29)   ;
\draw    (174.06,1206.67) .. controls (140.42,1247) and (219.14,1254.6) .. (193.67,1205.38) ;
\draw [shift={(192.86,1203.87)}, rotate = 61.28] [color={rgb, 255:red, 0; green, 0; blue, 0 }  ][line width=0.75]    (10.93,-3.29) .. controls (6.95,-1.4) and (3.31,-0.3) .. (0,0) .. controls (3.31,0.3) and (6.95,1.4) .. (10.93,3.29)   ;

\draw (166,1007.22) node [anchor=north west][inner sep=0.75pt]   [align=left] {$w_8$};
\draw (298.33,1007.22) node   [align=left] {\begin{minipage}[lt]{68pt}\setlength\topsep{0pt}
$\ns$
\end{minipage}};
\draw (298.33,1170.3) node   [align=left] {\begin{minipage}[lt]{68pt}\setlength\topsep{0pt}
$L6$
\end{minipage}};
\draw (166,1170.3) node [anchor=north west][inner sep=0.75pt]   [align=left] {$w_7$};

\end{tikzpicture}

%% file: ex/ex.T.tex
\tikzset{every picture/.style={line width=0.5pt}} 

\begin{tikzpicture}[x=0.75pt,y=0.75pt,yscale=-0.7,xscale=0.7]

\draw    (23,76) -- (131.11,113.35) ;
\draw [shift={(133,114)}, rotate = 199.06] [color={rgb, 255:red, 0; green, 0; blue, 0 }  ][line width=0.75]    (10.93,-3.29) .. controls (6.95,-1.4) and (3.31,-0.3) .. (0,0) .. controls (3.31,0.3) and (6.95,1.4) .. (10.93,3.29)   ;
\draw    (21,54.5) -- (124.13,15.7) ;
\draw [shift={(126,15)}, rotate = 159.38] [color={rgb, 255:red, 0; green, 0; blue, 0 }  ][line width=0.75]    (10.93,-3.29) .. controls (6.95,-1.4) and (3.31,-0.3) .. (0,0) .. controls (3.31,0.3) and (6.95,1.4) .. (10.93,3.29)   ;
\draw    (154,14.5) -- (263,14.5) ;
\draw [shift={(265,14.5)}, rotate = 180] [color={rgb, 255:red, 0; green, 0; blue, 0 }  ][line width=0.75]    (10.93,-3.29) .. controls (6.95,-1.4) and (3.31,-0.3) .. (0,0) .. controls (3.31,0.3) and (6.95,1.4) .. (10.93,3.29)   ;

\draw (7,56) node [anchor=north west][inner sep=0.75pt]   [align=left] {$t_1$};
\draw (135,6) node [anchor=north west][inner sep=0.75pt]   [align=left] {$t_2$};
\draw (135,105) node [anchor=north west][inner sep=0.75pt]   [align=left] {$t_3$};
\draw (270,6) node [anchor=north west][inner sep=0.75pt]   [align=left] {$t_4$};

\end{tikzpicture}

%% file: ex/ex.last.1.tex
\tikzset{every picture/.style={line width=0.75pt}} 

\begin{tikzpicture}[x=0.75pt,y=0.75pt,yscale=-0.5,xscale=0.5]

\draw   (157,1019.37) .. controls (157,1005.57) and (168.19,994.37) .. (182,994.37) .. controls (195.81,994.37) and (207,1005.57) .. (207,1019.37) .. controls (207,1033.18) and (195.81,1044.37) .. (182,1044.37) .. controls (168.19,1044.37) and (157,1033.18) .. (157,1019.37) -- cycle ;
\draw    (180.75,1046.61) -- (181,1157.37) ;
\draw [shift={(180.75,1044.61)}, rotate = 89.87] [color={rgb, 255:red, 0; green, 0; blue, 0 }  ][line width=0.75]    (10.93,-3.29) .. controls (6.95,-1.4) and (3.31,-0.3) .. (0,0) .. controls (3.31,0.3) and (6.95,1.4) .. (10.93,3.29)   ;
\draw   (157,1182.37) .. controls (157,1168.57) and (168.19,1157.37) .. (182,1157.37) .. controls (195.81,1157.37) and (207,1168.57) .. (207,1182.37) .. controls (207,1196.18) and (195.81,1207.37) .. (182,1207.37) .. controls (168.19,1207.37) and (157,1196.18) .. (157,1182.37) -- cycle ;
\draw    (187.36,994.33) .. controls (214.3,944.86) and (125.71,959.39) .. (170.37,995.23) ;
\draw [shift={(171.76,996.33)}, rotate = 217.44] [color={rgb, 255:red, 0; green, 0; blue, 0 }  ][line width=0.75]    (10.93,-3.29) .. controls (6.95,-1.4) and (3.31,-0.3) .. (0,0) .. controls (3.31,0.3) and (6.95,1.4) .. (10.93,3.29)   ;
\draw    (174.06,1206.67) .. controls (140.42,1247) and (219.14,1254.6) .. (193.67,1205.38) ;
\draw [shift={(192.86,1203.87)}, rotate = 61.28] [color={rgb, 255:red, 0; green, 0; blue, 0 }  ][line width=0.75]    (10.93,-3.29) .. controls (6.95,-1.4) and (3.31,-0.3) .. (0,0) .. controls (3.31,0.3) and (6.95,1.4) .. (10.93,3.29)   ;

\draw (166,1007.22) node [anchor=north west][inner sep=0.75pt]   [align=left] {$w_2$};
\draw (298.33,1007.22) node   [align=left] {\begin{minipage}[lt]{68pt}\setlength\topsep{0pt}
$\bs$
\end{minipage}};
\draw (298.33,1170.3) node   [align=left] {\begin{minipage}[lt]{68pt}\setlength\topsep{0pt}
$L6$
\end{minipage}};
\draw (166,1170.3) node [anchor=north west][inner sep=0.75pt]   [align=left] {$w_1$};

\end{tikzpicture}

%% file: ex/ex.last.2.tex
\tikzset{every picture/.style={line width=0.75pt}} 

\begin{tikzpicture}[x=0.75pt,y=0.75pt,yscale=-0.5,xscale=0.5]

\draw   (157,1019.37) .. controls (157,1005.57) and (168.19,994.37) .. (182,994.37) .. controls (195.81,994.37) and (207,1005.57) .. (207,1019.37) .. controls (207,1033.18) and (195.81,1044.37) .. (182,1044.37) .. controls (168.19,1044.37) and (157,1033.18) .. (157,1019.37) -- cycle ;
\draw    (180.75,1046.61) -- (181,1157.37) ;
\draw [shift={(180.75,1044.61)}, rotate = 89.87] [color={rgb, 255:red, 0; green, 0; blue, 0 }  ][line width=0.75]    (10.93,-3.29) .. controls (6.95,-1.4) and (3.31,-0.3) .. (0,0) .. controls (3.31,0.3) and (6.95,1.4) .. (10.93,3.29)   ;
\draw   (157,1182.37) .. controls (157,1168.57) and (168.19,1157.37) .. (182,1157.37) .. controls (195.81,1157.37) and (207,1168.57) .. (207,1182.37) .. controls (207,1196.18) and (195.81,1207.37) .. (182,1207.37) .. controls (168.19,1207.37) and (157,1196.18) .. (157,1182.37) -- cycle ;
\draw    (187.36,994.33) .. controls (214.3,944.86) and (125.71,959.39) .. (170.37,995.23) ;
\draw [shift={(171.76,996.33)}, rotate = 217.44] [color={rgb, 255:red, 0; green, 0; blue, 0 }  ][line width=0.75]    (10.93,-3.29) .. controls (6.95,-1.4) and (3.31,-0.3) .. (0,0) .. controls (3.31,0.3) and (6.95,1.4) .. (10.93,3.29)   ;
\draw    (174.06,1206.67) .. controls (140.42,1247) and (219.14,1254.6) .. (193.67,1205.38) ;
\draw [shift={(192.86,1203.87)}, rotate = 61.28] [color={rgb, 255:red, 0; green, 0; blue, 0 }  ][line width=0.75]    (10.93,-3.29) .. controls (6.95,-1.4) and (3.31,-0.3) .. (0,0) .. controls (3.31,0.3) and (6.95,1.4) .. (10.93,3.29)   ;

\draw (166,1007.22) node [anchor=north west][inner sep=0.75pt]   [align=left] {$w_4$};
\draw (298.33,1007.22) node   [align=left] {\begin{minipage}[lt]{68pt}\setlength\topsep{0pt}
$\bs$
\end{minipage}};
\draw (298.33,1170.3) node   [align=left] {\begin{minipage}[lt]{68pt}\setlength\topsep{0pt}
$L6$
\end{minipage}};
\draw (166,1170.3) node [anchor=north west][inner sep=0.75pt]   [align=left] {$w_3$};

\end{tikzpicture}

%% file: ex/ex.last.3.tex
\tikzset{every picture/.style={line width=0.75pt}} 

\begin{tikzpicture}[x=0.75pt,y=0.75pt,yscale=-0.5,xscale=0.5]

\draw   (157,1019.37) .. controls (157,1005.57) and (168.19,994.37) .. (182,994.37) .. controls (195.81,994.37) and (207,1005.57) .. (207,1019.37) .. controls (207,1033.18) and (195.81,1044.37) .. (182,1044.37) .. controls (168.19,1044.37) and (157,1033.18) .. (157,1019.37) -- cycle ;
\draw    (180.75,1046.61) -- (181,1157.37) ;
\draw [shift={(180.75,1044.61)}, rotate = 89.87] [color={rgb, 255:red, 0; green, 0; blue, 0 }  ][line width=0.75]    (10.93,-3.29) .. controls (6.95,-1.4) and (3.31,-0.3) .. (0,0) .. controls (3.31,0.3) and (6.95,1.4) .. (10.93,3.29)   ;
\draw   (157,1182.37) .. controls (157,1168.57) and (168.19,1157.37) .. (182,1157.37) .. controls (195.81,1157.37) and (207,1168.57) .. (207,1182.37) .. controls (207,1196.18) and (195.81,1207.37) .. (182,1207.37) .. controls (168.19,1207.37) and (157,1196.18) .. (157,1182.37) -- cycle ;
\draw    (187.36,994.33) .. controls (214.3,944.86) and (125.71,959.39) .. (170.37,995.23) ;
\draw [shift={(171.76,996.33)}, rotate = 217.44] [color={rgb, 255:red, 0; green, 0; blue, 0 }  ][line width=0.75]    (10.93,-3.29) .. controls (6.95,-1.4) and (3.31,-0.3) .. (0,0) .. controls (3.31,0.3) and (6.95,1.4) .. (10.93,3.29)   ;
\draw    (174.06,1206.67) .. controls (140.42,1247) and (219.14,1254.6) .. (193.67,1205.38) ;
\draw [shift={(192.86,1203.87)}, rotate = 61.28] [color={rgb, 255:red, 0; green, 0; blue, 0 }  ][line width=0.75]    (10.93,-3.29) .. controls (6.95,-1.4) and (3.31,-0.3) .. (0,0) .. controls (3.31,0.3) and (6.95,1.4) .. (10.93,3.29)   ;

\draw (166,1007.22) node [anchor=north west][inner sep=0.75pt]   [align=left] {$w_6$};
\draw (298.33,1007.22) node   [align=left] {\begin{minipage}[lt]{68pt}\setlength\topsep{0pt}
$\bs$
\end{minipage}};
\draw (298.33,1170.3) node   [align=left] {\begin{minipage}[lt]{68pt}\setlength\topsep{0pt}
$L6$
\end{minipage}};
\draw (166,1170.3) node [anchor=north west][inner sep=0.75pt]   [align=left] {$w_5$};

\end{tikzpicture}

%% file: ex/ex.tree.last.tex
\tikzset{every picture/.style={line width=0.75pt}} 

\begin{tikzpicture}[x=0.75pt,y=0.75pt,yscale=-1,xscale=1]

\draw    (23,76) -- (131.11,113.35) ;
\draw [shift={(133,114)}, rotate = 199.06] [color={rgb, 255:red, 0; green, 0; blue, 0 }  ][line width=0.75]    (10.93,-3.29) .. controls (6.95,-1.4) and (3.31,-0.3) .. (0,0) .. controls (3.31,0.3) and (6.95,1.4) .. (10.93,3.29)   ;
\draw    (21,54.5) -- (124.13,15.7) ;
\draw [shift={(126,15)}, rotate = 159.38] [color={rgb, 255:red, 0; green, 0; blue, 0 }  ][line width=0.75]    (10.93,-3.29) .. controls (6.95,-1.4) and (3.31,-0.3) .. (0,0) .. controls (3.31,0.3) and (6.95,1.4) .. (10.93,3.29)   ;

\draw (7,56) node [anchor=north west][inner sep=0.75pt]   [align=left] {$t_1$};
\draw (131,6) node [anchor=north west][inner sep=0.75pt]   [align=left] {$t_2$};
\draw (138,105) node [anchor=north west][inner sep=0.75pt]   [align=left] {$t_3$};

\end{tikzpicture}